\newtheorem{definition}{Definition}[section]
\newtheorem{lemma}{Lemma}[section]
\theoremstyle{definition}
\renewcommand{\theta}{\vartheta}
\title{{\Large\bf Proof of a Generalized Ryu-Takayanagi Conjecture}}
\author{{\bf Artem Averin$^{\textrm{a}}$\footnote{artem.averin@campus.lmu.de}}}
\begin{document}

\maketitle

\centerline{\it $^{\textrm{a}}$ Arnold--Sommerfeld--Center for Theoretical Physics,}
\centerline{\it Ludwig--Maximilians--Universit\"at, 80333 M\"unchen, Germany}

\vskip1cm
\begin{abstract}
{{
We derive a generalized version of the Ryu-Takayanagi formula for the entanglement entropy in arbitrary diffeomorphism invariant field theories. We use a recent framework which expresses the measurable quantities of a quantum theory as a weighted sum over paths in the theory's phase space. If this framework is applied to a field theory on a spacetime foliated by a hypersurface $\Sigma,$ the choice of a codimension-2 surface $B$ without boundary contained in $\Sigma$ specifies a submanifold in the phase space. For diffeomorphism invariant field theories, a functional integral expression for their density matrices was recently given and then used to derive bounds on phase space volumes in the considered submanifold associated to $B.$ These bounds formalize the gravitational entropy bound. Here, we present an implication of this derivation in that we show the obtained functional integral expression for density matrices to be naturally suited for the replica trick. Correspondingly, we prove a functional integral expression for the associated entanglement entropies and derive a practical prescription for their evaluation to leading order and beyond. An important novelty of our approach is the contact to phase space. This allows us both to obtain a prescription for entanglement entropies in arbitrary diffeomorphism invariant field theories not necessarily possessing a holographic dual as well as to use entanglement entropies to study their phase space structure. In the case of the bulk-boundary correspondence, our prescription consistently reproduces and hence provides a natural and independent proof of the Ryu-Takayanagi formula as well as its various generalizations. These include the covariant holographic entanglement entropy proposal, Dong's proposal for higher-derivative gravity as well as the quantum extremal surface prescription.             
}}
\end{abstract}


\newpage

\setcounter{tocdepth}{2}
\tableofcontents
\break

\section{Geometry of Possibilities}
\label{Kapitel 1}

The measurable quantities of a quantum theory can be represented as weighted sums of the possibilities of how the system can evolve in principle. The set of possibilities together with the associated weight factors specifies hereby the theory under consideration.

A natural question is then which possibilities can be resolved or distinguished by the measurable quantities (and hence correspond to quantum mechanically distinct states)? We argue that, if Nature is indeed completely described by a quantum theory, the mentioned question addresses in a unified way various open problems in physics, in particular: Why can black holes only resolve a finite number of microstates as given by their entropy (and how to describe them)? Why are colored states not resolvable at low energies in quantum chromodynamics and why is there a mass gap in the spectrum?

The question raised above is in general related to how sensitive the measurable quantities (given by a weighted sum over the possibilities) are to the different possibilities. A set of possibilities can be of high or low influence to this sum. Hence, a given set of possibilities is motivated to be thought of as being assigned a volume which is a measure of the quantum mechanical sensitivity of observable quantities on this particular set of possibilities. The set of possibilities then naturally possesses a geometric structure. We learn that the study of this geometric structure corresponds to the questions raised above. 

A precise mathematical framework for the quantum mechanical sum of possibilities discussed qualitatively above was given in \cite{Averin:2024its}. In this framework, measurable quantities of a quantum theory are given by a weighted sum of paths in the theory's phase space. According to our discussion above, we are motivated to study the geometric structure of this phase space for a given theory. However, phase spaces are technically speaking symplectic manifolds which are known to be locally indistinguishable. Hence, differential methods are not helpful in analyzing their geometries. Instead, we want to consider weighted sums over paths in those phase spaces which allow to draw conclusions about the phase space structure and geometry. In this sense, we are doing ``integral geometry.'' The precise form of the considered sums is hereby motivated quantum mechanically.

This geometric viewpoint on quantum mechanics as the sum of possibilities indeed turns out to be insightful. In \cite{Averin:2024its}, the notion of partitioning this sum by forming suited bundles of possibilities (termed possifolds) was introduced. For a field theory, it was shown that there is a natural way to form these bundles. Technically speaking, for a field theory on a spacetime foliated by a hypersurface $\Sigma,$ the possifold flow naturally associates to the choice of a codimension-2 surface $B \subseteq \Sigma$ a submanifold in the theory's phase space. 

Using the discussed integral geometric reasoning, it was shown in \cite{Averin:2024yeo} that this submanifold is for diffeomorphism invariant field theories restricted to obey a bound on phase space volumes. This formalizes and derives what was heuristically expected to be the gravitational entropy bound.

The key point of the derivation in \cite{Averin:2024yeo} is to consider a certain sum of paths in the theory's phase space that quantum mechanically is interpreted as giving the density matrix elements of a state where the reduction of Hilbert-space is performed by using the submanifold associated to $B$ with the possifold flow. The intimate interplay between the commutativity of this sum together with diffeomorphism invariance allows to rewrite those density matrix elements as a sum over paths in the submanifold associated to $B.$ From this functional integral expression of the density matrix, the gravitational entropy bound then directly follows. 

Here, we want to introduce a further integral geometric tool for the study of phase space structures. In fact, our results here are a direct implication of the proof of the gravitational entropy bound in \cite{Averin:2024yeo}. There, the integral geometric tool was to obtain a suited functional integral expression for density matrices. Here, we will focus on the related von Neumann entropies which in this context have the meaning of an entanglement entropy. 

Therefore, our main result here will be to derive a functional integral expression for entanglement entropies in arbitrary diffeomorphism invariant field theories and to provide a prescription for their evaluation to leading order and beyond. 

This result is useful for several reasons. 

First, gravitational entanglement entropies were discussed in the literature in the context of the bulk-boundary correspondence where the bulk gravitational theory possesses a holographic boundary dual (see \cite{Rangamani:2016dms} for a review). The Ryu-Takayanagi conjecture made in \cite{Ryu:2006bv,Ryu:2006ef} proposes a prescription for the evaluation of certain entanglement entropies in the context of the bulk-boundary correspondence. The conjecture was generalized in various directions such as to the Hubeny-Rangamany-Takayanagi covariant holographic entanglement entropy proposal \cite{Hubeny:2007xt}, Dong's proposal on higher-derivative gravity \cite{Dong:2013qoa}, beyond leading order proposals made in \cite{Faulkner:2013ana} and the quantum extremal surface prescription \cite{Engelhardt:2014gca}. Proofs of the individual proposals were given in \cite{Lewkowycz:2013nqa,Dong:2017xht} (see also the review \cite{Rangamani:2016dms} and references therein).

Our result generalizes the various versions of the Ryu-Takayanagi conjecture to the general case of diffeomorphism invariant field theories not necessarily possessing a holographic dual. If so, it consistently reproduces the known results. Thereby, our proof provides an independent derivation which explains the physical and mathematical origin of the Ryu-Takayanagi type prescriptions. 

Second, our derived functional integral expression for gravitational entanglement entropies is given by a sum over paths in a submanifold of phase space determined by the mentioned concept of the possifold flow. Hence, in contrast to the known Ryu-Takayanagi type prescriptions, our gravitational entanglement entropy prescription always makes contact to the phase space. 

On the one hand, this contact to phase space is the reason why we are able to give a sensible definition of gravitational entanglement entropies in a non-holographic setting. 

On the other hand, the contact to phase space allows us using entanglement entropies as a tool of integral geometry as we aimed at.

It is precisely an application of our results presented here to apply this mentioned tool involving entanglement entropy to situations with the presence of black holes. We would like then to infer which part of the phase space in gravitational theories is responsible for their entanglement entropy and consequently for their microstates? However, we want to devote this analysis its own place. Our purpose in this paper is to develop the necessary tool in the form of a derivation of a generalized version of the Ryu-Takayanagi prescription. 

The paper is organized as follows. In chapter \ref{Kapitel 2}, we derive a general functional integral expression for entanglement entropies in arbitrary diffeomorphism invariant field theories. We derive a practical prescription for their evaluation to leading order and beyond in chapter \ref{Kapitel 3}. In chapter \ref{Kapitel 4}, we show that our derived prescription reproduces in the special case of the bulk-boundary correspondence the Ryu-Takayanagi formula and its various known generalizations. We give a discussion and outlook in chapter \ref{Kapitel 5}. 

We set Planck's constant $\hbar$ to one although we restore it occasionally in order to emphasize the entrance of quantum mechanics.        

\section{Derivation}
\label{Kapitel 2}

We use the formalism and language developed in \cite{Averin:2024its}.

We want to derive an expression for the entanglement entropy in diffeomorphism invariant field theories. A natural first step to do so would be to give an expression for density matrices in such theories. In fact, this turns out to be the main step and it was already done in \cite{Averin:2024yeo}. Hence, we can literally follow chapter $2$ of \cite{Averin:2024yeo} until Lemma $2.3.$ Especially, we adopt the same notations and conventions of \cite{Averin:2024yeo}.

That is, we consider a diffeomorphism invariant field theory $(\Gamma,\Theta,H)$ on spacetime $M=\mathbb{R} \times \Sigma$ and associated Hilbert-space $\mathcal{H}.$ We choose a state of the form

\begin{equation}
| \Psi \rangle = e^{HT_E} | \chi \rangle
\label{1}
\end{equation}

for fixed $T_E \in \mathbb{R}$ and a position-eigenstate $| \chi \rangle.$ We factorize $\mathcal{H}$ by considering the possifold-flow $\partial \Sigma \to B$ as introduced in \cite{Averin:2024its} for a codimension-2 surface $B$ bounding a subset $\Sigma_1 = \Sigma(B) \subseteq \Sigma$ and $\Sigma_2$ being its complement $\Sigma_2 = \Sigma \setminus \Sigma_1.$ As explained in \cite{Averin:2024its}, a position-eigenstate is then of the form

\begin{equation}
|q \rangle = |q_1 \rangle |q_2 \rangle
\label{2}
\end{equation}

with $q_1$ and $q_2$ being generalized coordinates on $\Sigma_1$ and $\Sigma_2,$ respectively. We use the factorization \eqref{2} in order to reduce the density matrix associated to the state \eqref{1}. The reduced density matrix is then 

\begin{equation}
\rho = \int \mathcal{D}q_2 \langle q_2 | \left( | \Psi \rangle \langle \Psi | \right) |q_2 \rangle.
\label{3}
\end{equation}

For the details on the discussion of the last paragraph, we refer to the original references \cite{Averin:2024its,Averin:2024yeo}. However, as in \cite{Averin:2024its,Averin:2024yeo}, we again want to emphasize that it is the field theory nature of the considered theory $(\Gamma,\Theta,H)$ which leads to a natural factorization of the associated Hilbert-space $\mathcal{H}.$ The particular choice of a factorization \eqref{2} is naturally tied to the choice of a codimension-2 surface $B.$ This factorization is then used to define the reduced density matrix \eqref{3} of the state \eqref{1}. In other words, for field theories, there is a natural way to reduce density matrices. If, in addition, the field theory is diffeomorphism invariant, the reduced density matrix elements possess a natural functional integral expression as was shown in \cite{Averin:2024yeo}. Using \eqref{2}, the matrix elements of \eqref{3} for position-eigenstates $|q_{1,A} \rangle$ and $|q_{1,B} \rangle$ are given by

\begin{equation} \label{4}
\langle q_{1,A} | \rho |q_{1,B} \rangle 
= \int_{q (\alpha = 0)=q_{1,B} \atop q(\alpha=\pi, t=\pm iT_E)=\chi}^{q(\alpha=2\pi)=q_{1,A}} \operatorname{Vol}^{(B)}(\alpha) e^{i\int_{0}^{2\pi}d \alpha \left( \Theta^{(B)}_{\Phi(\alpha)} \left[\dot{\Phi} (\alpha) \right] - G[\Phi(\alpha)] \right)}.
\end{equation}

The result was shown in Lemma $2.3$ in \cite{Averin:2024yeo}. $G$ is a generator on $(\Gamma^{(B)},\Theta^{(B)})$ of a certain diffeomorphism. The explicit form of $G$ is given in \cite{Averin:2024yeo}. 

With the equality \eqref{4}, we have complete knowledge of the density matrix $\rho.$ From this, we want to find an expression for the entanglement entropy

\begin{equation}
S = -\operatorname{tr} (\rho \ln (\rho)).
\label{5}
\end{equation}

A well-known way to do so is by what is known as the replica trick (see, for instance, \cite{Rangamani:2016dms}). It consists of computing $\rho^n$ from $\rho$ for integer powers $n \in \mathbb{N}.$ The entropy \eqref{5} is then deduced by analytic continuation in $n.$ 

In our situation, the replica trick realizes very naturally. For $n \in \mathbb{N},$ we obtain from \eqref{4} by matrix multiplication the expression for $\rho^n$ as

\begin{equation} \label{6}
\langle q_{1,A} | \rho^n |q_{1,B} \rangle 
= \int_{q (\alpha = 0)=q_{1,B} \atop q(\alpha \in \pi(2\mathbb{Z}+1), t=\pm iT_E)=\chi}^{q(\alpha=2\pi n)=q_{1,A}} \operatorname{Vol}^{(B)}(\alpha) e^{i\int_{0}^{2\pi n}d \alpha \left( \Theta^{(B)}_{\Phi(\alpha)} \left[\dot{\Phi} (\alpha) \right] - G[\Phi(\alpha)] \right)}.
\end{equation}

Both sides of \eqref{6} can be analytically continued in $n.$ Because of Carlson's theorem, the equality \eqref{6} stays valid for real values of $n \in \mathbb{R}.$

In the context of the replica trick, $n$ is known as the R\'{e}nyi-index. In our situation, it has a vivid meaning. As mentioned above, $G$ generates a diffeomorphism constructed in \cite{Averin:2024yeo}. Near $B,$ the diffeomorphism looks in a locally flat coordinate system like a rotation in the normal plane to $B.$ In \eqref{6}, the R\'{e}nyi-index $n$ counts the number of full rotations in this plane. The expression \eqref{6} naturally allows a continuation of the R\'{e}nyi-index $n$ to non-integer values. We refer to \cite{Averin:2024yeo} for more details on the diffeomorphism generated by $G$ where it was discussed and visualized in great detail. 

Having now a functional integral expression \eqref{6} for $\rho^n,$ we would like to proceed the replica trick in order to compute the associated entropy. For this purpose, we want to introduce an abbreviation of the integration limits in expressions like \eqref{6}. This motivates the following

\begin{definition} \label{Definition 2.1}
A path $\Phi \colon [0,2\pi n] \to \Gamma^{(B)}$ in $\Gamma^{(B)}$ with the properties

\begin{equation} \label{7}
\begin{split}
q (\alpha = 0)&=q_{1,B} \\
q(\alpha=2\pi n)&=q_{1,A} \\
q(\alpha \in \pi(2\mathbb{Z}+1), t=\pm iT_E)&=\chi
\end{split}
\end{equation}

is called a \emph{replica spacetime}. The set of all such paths is called $R_n.$
\end{definition}

Several remarks are in order.

With Definition \ref{Definition 2.1}, \eqref{6} expresses the matrix elements of $\rho^n$ as a sum over replica spacetimes. More generally, we will use the term replica spacetime for such an abbreviation to apply. The Definition \ref{Definition 2.1} has to be understood accordingly. For instance, if we were interested in $\operatorname{tr}(\rho^n),$ we would drop the first two equations in \eqref{7} and instead require the path in Definition \ref{Definition 2.1} to be closed. Again, by \eqref{6}, $\operatorname{tr}(\rho^n)$ is then given by a summation of replica spacetimes. 

More precisely,

\begin{equation}
\operatorname{tr} (\rho^n)
= \int_{\Phi \in R_n} \operatorname{Vol}^{(B)}(\alpha) e^{i\int_{0}^{2\pi n}d \alpha \left( \Theta^{(B)}_{\Phi(\alpha)} \left[\dot{\Phi} (\alpha) \right] - G[\Phi(\alpha)] \right)}
\label{8}
\end{equation}

where $R_n$ is the appropriate mentioned definition of a replica spacetime. 

Since we are considering diffeomorphism invariant field theories, the term replica \emph{spacetime} is adequate. This is because diffeomorphism invariant field theories possess a Lagrangian that can always be written as depending on a spacetime metric (see the discussion in \cite{Averin:2024yeo} and references therein). In fact, since $\Gamma^{(B)}$ is the phase space of the canonical coordinates on $\Sigma(B) \subseteq \Sigma,$ we can view a path $\Phi \colon [0,2\pi n] \to \Gamma^{(B)}$ as realizing a spacetime $[0, 2\pi n] \times \Sigma(B)$ foliated by $\Sigma(B)$ along the time $[0, 2\pi n].$ In this way, we understand a replica spacetime as a \emph{spacetime}. 

The replica trick demands us to compute the derivative of $\rho^n$ with respect to $n$ in order to finally get an expression for the entanglement entropy. We use the expression for $\rho^n$ for non-integer $n$ found in \eqref{6} together with the notation introduced in Definition \ref{Definition 2.1} to perform the required derivative in the next

\begin{lemma} \label{Lemma 2.1}
For the expression \eqref{6}, one has

\begin{equation} \label{9}
\begin{split}
&-\frac{d}{dn} \langle q_{1,A} | \rho^n |q_{1,B} \rangle \\
=& \int_{\Phi \in R_n} \operatorname{Vol}^{(B)}(\alpha) K[\Phi(\alpha=2\pi n)] e^{i\int_{0}^{2\pi n}d \alpha \left( \Theta^{(B)}_{\Phi(\alpha)} \left[\dot{\Phi} (\alpha) \right] - G[\Phi(\alpha)] \right)}
\end{split}
\end{equation}

with

\begin{equation}
K=2\pi i G
\label{10}
\end{equation}

being a generator on $(\Gamma^{(B)},\Theta^{(B)}).$
\end{lemma}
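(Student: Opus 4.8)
The plan is to read the functional integral \eqref{6} for $\langle q_{1,A}|\rho^n|q_{1,B}\rangle$ as a Hamiltonian phase-space path integral, in which the parameter $\alpha$ plays the role of time, the symplectic potential $\Theta^{(B)}$ supplies the canonical term, and the generator $G$ plays the role of the Hamiltonian. Under this dictionary the evolution over an angular interval is generated by the operator $\hat G$ attached to the phase-space function $G$ on $(\Gamma^{(B)},\Theta^{(B)})$, with no explicit $\alpha$-dependence and hence no time ordering. This is the operator content of the statement in the text that $n$ counts the number of full $2\pi$-rotations generated by $G$: advancing the upper endpoint of the path from $\alpha=2\pi n$ to $\alpha=2\pi(n+dn)$ appends an infinitesimal slice of $\hat G$-evolution, $1-2\pi i\,\hat G\,dn$, while the state-preparation data — the $\chi$-conditions in \eqref{7} — are held fixed.

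Granting this, the derivative is immediate. Using the semigroup structure $\rho^{n+dn}=\rho^{n}\rho^{dn}$ to append the infinitesimal slice gives $\frac{d}{dn}\rho^{n}=-2\pi i\,\hat G\,\rho^{n}$, and taking matrix elements between $\langle q_{1,A}|$ and $|q_{1,B}\rangle$ places $\hat G$ at the final value $\alpha=2\pi n$ of the evolution parameter. Translating this endpoint insertion back into path-integral language inserts the phase-space function $G[\Phi(\alpha=2\pi n)]$ under the integral. With the abbreviation of Definition \ref{Definition 2.1} this is precisely the right-hand side of \eqref{9}, provided we set $K=2\pi i\,G$ as in \eqref{10} and keep track of the overall minus sign in $-\frac{d}{dn}$.

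The step that requires care — and the main obstacle — is the boundary term at the moving endpoint. A naive Leibniz differentiation of the upper limit of the action in \eqref{6} would bring down the entire integrand at $\alpha=2\pi n$, namely $2\pi i\,\big(\Theta^{(B)}_{\Phi(2\pi n)}[\dot\Phi(2\pi n)]-G[\Phi(2\pi n)]\big)$, i.e. a spurious canonical piece $\Theta^{(B)}[\dot\Phi]$ on top of the desired $-G$. I would show that this kinetic boundary contribution drops out: the endpoint is pinned by the boundary condition $q(\alpha=2\pi n)=q_{1,A}$, and the simultaneous variation of the integration domain $R_n$ cancels the $\Theta^{(B)}[\dot\Phi]$ term, leaving only $G$. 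This is the path-integral incarnation of the elementary operator identity $\frac{d}{dT}\langle q_A|e^{-i\hat H T}|q_B\rangle=-i\langle q_A|\hat H\,e^{-i\hat H T}|q_B\rangle$, in which the Hamiltonian, and no kinetic term, appears; working directly with the operator $\hat G$ as above makes the cancellation automatic and is the cleanest route.

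Finally, I would check that the differentiation is legitimate at non-integer $n$. Since \eqref{6} has already been continued to real $n$ by Carlson's theorem, $\frac{d}{dn}$ acts on a genuinely smooth function of $n$, and the $\chi$-insertions at the odd multiples of $\pi$ are part of the fixed state-preparation data that is not advanced by $\frac{d}{dn}$. Collecting the factor $2\pi$, the sign, and the single endpoint insertion then yields \eqref{9} with the generator $K=2\pi i\,G$ of \eqref{10}.
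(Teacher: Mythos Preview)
Your argument is correct and lands on the same mechanism as the paper: advancing $n$ appends an infinitesimal slice of $G$-evolution at the endpoint, and the would-be $\Theta^{(B)}[\dot\Phi]$ boundary term is spurious. The execution differs. You invoke the operator semigroup $\rho^{n+dn}=\rho^{dn}\rho^{n}\approx(1-2\pi i\,\hat G\,dn)\rho^{n}$ and then translate the left insertion of $\hat G$ back into an endpoint insertion of $G[\Phi(2\pi n)]$. The paper stays entirely on the path-integral side: it builds an explicit bijection $R_n\to R_{n+dn}$ by a linear reparametrization $\tilde\alpha(\alpha)$ that stretches the final segment, so that $\Phi$ and $\tilde\Phi$ trace out \emph{the same curve} in $\Gamma^{(B)}$; the vanishing of the $\Theta^{(B)}$ contribution then follows from reparametrization invariance of the line integral of a one-form, $\int_{\tilde\Phi}\Theta^{(B)}-\int_{\Phi}\Theta^{(B)}=0$, while the $G$-integral picks up exactly $2\pi\,dn\,G[\Phi(2\pi n)]$. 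Your operator route makes the cancellation automatic but leans on identifying the path integral with an evolution operator; the paper's reparametrization makes the cancellation a one-line geometric fact and never leaves the functional integral, which is closer in spirit to the ``covariance of the functional integral'' the paper emphasizes elsewhere.
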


\begin{proof}
For $n \in \mathbb{Z},$ let be $dn > 0.$ We have according to \eqref{6} using Definition \ref{Definition 2.1}

\begin{equation} \label{11}
\begin{split}
& \langle q_{1,A} | \rho^{n+dn} |q_{1,B} \rangle - \langle q_{1,A} | \rho^n |q_{1,B} \rangle \\
=& \int_{\Phi \in R_{n+dn}} \operatorname{Vol}^{(B)}(\alpha) e^{i\int_{0}^{2\pi (n+dn)}d \alpha \left( \Theta^{(B)}_{\Phi(\alpha)} \left[\dot{\Phi} (\alpha) \right] - G[\Phi(\alpha)] \right)} \\
-& \int_{\Phi \in R_n} \operatorname{Vol}^{(B)}(\alpha) e^{i\int_{0}^{2\pi n}d \alpha \left( \Theta^{(B)}_{\Phi(\alpha)} \left[\dot{\Phi} (\alpha) \right] - G[\Phi(\alpha)] \right)}.
\end{split}
\end{equation}

We define the mapping

\begin{equation}
R_n \to R_{n+dn}, \Phi \mapsto \tilde{\Phi}
\label{12}
\end{equation}

via 

\begin{equation} \label{13}
\tilde{\Phi}(\alpha) :=
\begin{cases}
  \Phi(\alpha),  & \text{for } \alpha \in [0,2\pi(n-dn)]\\
  \Phi(\tilde{\alpha}(\alpha)), & \text{for } \alpha \in [2\pi (n-dn),2\pi (n+dn)].
\end{cases}
\end{equation}

Hereby,

\begin{equation}
\tilde{\alpha} \colon [2\pi (n-dn),2\pi (n+dn)] \to [2\pi (n-dn), 2\pi n]
\label{14}
\end{equation}

is the linear function uniquely determined by requiring

\begin{equation} \label{15}
\begin{split}
\tilde{\alpha} (2\pi (n-dn)) &= 2\pi (n-dn) \\
\tilde{\alpha} (2\pi (n+dn)) &= 2\pi n.
\end{split}
\end{equation}

In total, the mapping \eqref{12} is a change of the time label for a path in $\Gamma^{(B)}.$ As such, $\Phi \mapsto \tilde{\Phi}$ is a bijection which keeps the measure of the functional integration in \eqref{11} invariant. Therefore, we can conclude

\begin{equation} \label{16}
\begin{split}
& \langle q_{1,A} | \rho^{n+dn} |q_{1,B} \rangle - \langle q_{1,A} | \rho^n |q_{1,B} \rangle \\
=& \int_{\Phi \in R_{n}} \operatorname{Vol}^{(B)}(\alpha) \left( e^{i\int_{0}^{2\pi (n+dn)}d \alpha \left( \Theta^{(B)}_{\tilde{\Phi}(\alpha)} \left[\dot{\tilde{\Phi}} (\alpha) \right] - G[\tilde{\Phi}(\alpha)] \right)} \right. \\
-& \left. e^{i\int_{0}^{2\pi n}d \alpha \left( \Theta^{(B)}_{\Phi(\alpha)} \left[\dot{\Phi} (\alpha) \right] - G[\Phi(\alpha)] \right)} \right) \\
=& \int_{\Phi \in R_n} \operatorname{Vol}^{(B)}(\alpha) e^{i\int_{0}^{2\pi n}d \alpha \left( \Theta^{(B)}_{\Phi(\alpha)} \left[\dot{\Phi} (\alpha) \right] - G[\Phi(\alpha)] \right)} \cdot \\
&i \left( \int_{0}^{2\pi (n+dn)}d \alpha \left( \Theta^{(B)}_{\tilde{\Phi}(\alpha)} \left[\dot{\tilde{\Phi}} (\alpha) \right] - G[\tilde{\Phi}(\alpha)] \right) \right. \\
&- \left. \int_{0}^{2\pi n}d \alpha \left( \Theta^{(B)}_{\Phi(\alpha)} \left[\dot{\Phi} (\alpha) \right] - G[\Phi(\alpha)] \right) \right) + O(dn^2). 
\end{split}
\end{equation}

We have

\begin{equation}
\int_{0}^{2\pi (n+dn)}d \alpha \Theta^{(B)}_{\tilde{\Phi}(\alpha)} \left[\dot{\tilde{\Phi}}(\alpha)\right] - 
\int_{0}^{2\pi n}d \alpha \Theta^{(B)}_{\Phi(\alpha)} \left[\dot{\Phi} (\alpha) \right] 
=\int_{\tilde{\Phi}} \Theta^{(B)} - \int_{\Phi} \Theta^{(B)} = 0
\label{17}
\end{equation}

because both $\Phi$ and $\tilde{\Phi}$ trace out the same path in $\Gamma^{(B)}.$ Furthermore,

\begin{equation}
\int_{0}^{2\pi (n+dn)}d \alpha  G[\tilde{\Phi}(\alpha)] - \int_{0}^{2\pi n}d \alpha G[\Phi(\alpha)]
=2\pi dn \cdot G[\Phi(\alpha=2\pi n)] + O(dn^2).
\label{18}
\end{equation}

Using \eqref{17} and \eqref{18}, we obtain from \eqref{16}

\begin{equation} \label{19}
\begin{split}
& \langle q_{1,A} | \rho^{n+dn} |q_{1,B} \rangle - \langle q_{1,A} | \rho^n |q_{1,B} \rangle \\
=& \int_{\Phi \in R_n} \operatorname{Vol}^{(B)}(\alpha) e^{i\int_{0}^{2\pi n}d \alpha \left( \Theta^{(B)}_{\Phi(\alpha)} \left[\dot{\Phi} (\alpha) \right] - G[\Phi(\alpha)] \right)} \cdot \\
&i(-2\pi dn G[\Phi(\alpha=2\pi n)]) + O(dn^2).
\end{split}
\end{equation}

With the definition \eqref{10} of the generator $K,$ \eqref{9} follows from \eqref{19}.
\end{proof}

With Lemma \ref{Lemma 2.1} at hand, we can obtain a functional integral expression for the entanglement entropy. For a replica spacetime $\Phi \in R_n,$ we define the action functional

\begin{equation}
I_n [\Phi] := -i\int_{0}^{2\pi n}d \alpha \left( \Theta^{(B)}_{\Phi(\alpha)} \left[\dot{\Phi} (\alpha) \right] - G[\Phi(\alpha)] \right)
\label{20}
\end{equation}

such that we can write the partition functions \eqref{8} as

\begin{equation}
Z_n := \operatorname{tr} (\rho^n)
= \int_{\Phi \in R_n} \operatorname{Vol}^{(B)}(\alpha) e^{-I_n [\Phi]}.
\label{21}
\end{equation}

We remind that in \eqref{21} the appropriate definition of $R_n,$ i.e. the same as in \eqref{8}, has to be used. Lemma \ref{Lemma 2.1} determines the derivative of the partition functions $Z_n$ with respect to the R\'{e}nyi-index as

\begin{equation}
-\frac{d Z_n}{dn} = \int_{\Phi \in R_n} \operatorname{Vol}^{(B)}(\alpha) K[\Phi(\alpha=2\pi n)] e^{-I_n [\Phi]}.
\label{22}
\end{equation}

The expressions \eqref{21} and \eqref{22} are our main results. The reason is that they give explicit functional integral expressions for what is known \cite{Rangamani:2016dms} as the modular entropy

\begin{equation}
\tilde{S}_n := \left(1 - n\frac{d}{dn} \right) \ln (Z_n).
\label{23}
\end{equation}

The modular entropy for $n=1$ is the searched entanglement entropy \eqref{5}

\begin{equation}
S = \tilde{S}_1.
\label{24}
\end{equation}

This is easily verified using the definition in \eqref{21} together with the normalization

\begin{equation}
Z_1 = \operatorname{tr} (\rho) = 1
\label{25}
\end{equation}

of the density matrix.

To summarize, we have found in this chapter an explicit functional integral expression for the entanglement entropy (or more generally the modular entropy) that applies in any diffeomorphism invariant field theory. We have seen that the origin of this fixing of the modular entropy lies in both the covariance of the functional integral - an intrinsic quantum mechanical property reflecting the superposition principle - and in diffeomorphism invariance. As is most often the case in quantum field theory, we cannot expect in being able to fully sum up the functional integration in practical computations of
the modular entropies. Instead, we need to resort to approximations. In the next chapter, we want to find such proper approximations of the functional integral expressions for the entanglement entropy found in this chapter. 

\section{Evaluation}
\label{Kapitel 3}

In this chapter, we would like to ask how to find the entanglement entropy \eqref{24} explicitly. Having found explicit expressions in terms of the functional integrations \eqref{21} and \eqref{22}, we ask how to evaluate them in practice?

Before discussing the evaluation of \eqref{21} and \eqref{22}, let us briefly recap their ingredients. The integration over replica spacetimes originates from \eqref{4}. As explained, the derivation and meaning of \eqref{4} was discussed in detail in chapter $2$ until Lemma $2.3$ of \cite{Averin:2024yeo}. We again refer to this discussion and especially to Fig. $4$ therein for a visualization of the functional integration in \eqref{4}. There, the integration is performed over time-slices $\Sigma_1 = \Sigma(B) \subseteq \Sigma$ along the time $\alpha \in [0, 2\pi].$ The integration over replica spacetimes $R_n$ has the same meaning except that the duration of time is now generalized to $\alpha \in [0, 2\pi n]$ with a possibly non-integer $n.$ 

The other important ingredient appearing in \eqref{21} and \eqref{22} is the generator $K$ (or equivalently $G$ related by \eqref{10}). As mentioned, $G$ generates the diffeomorphism corresponding to the evolution along the time $\alpha$ in Fig. $4$ (see also Fig. $5$) of \cite{Averin:2024yeo}. This requirement fixes $G$ uniquely up to a constant shift $G \to G+c.$ Since we will need the explicit form of the generators in what follows, we state them here. The actual derivation was done in \cite{Averin:2024yeo} in the proof of Theorem $3.1.$ Up to a constant shift (which we leave open for a moment), we have

\begin{equation}
K[\Phi] = 2\pi \oint_B X^{cd} \varepsilon_{cd}
\label{26}
\end{equation}

for a state $\Phi \in \Gamma^{(B)}.$ With $n = \dim \Sigma +1$ being the spacetime dimension, the $(n-2)$-form $X^{cd}$ can be explicitly obtained by a variational procedure from the Lagrangian of the given theory. $\varepsilon_{cd}$ is the binormal to $B$ oriented such that $\varepsilon_{cd} T^c X^d > 0.$ Hereby, $T^a$ is a future-directed timelike vector determining the orientation of $\Sigma$ and $X^a$ is a spacelike vector pointing outwards to $B$ determining the orientation of $B.$ The details of this definition are explicitly stated in Theorem $3.1$ in \cite{Averin:2024yeo}. There, the procedure of obtaining the explicit form of $K$ is also shown in an explicit example. In Example $3.1$ in \cite{Averin:2024yeo}, $K$ is shown for Einstein-Hilbert gravity with minimally coupled matter to be of the form

\begin{equation}
K = \frac{A}{4G}
\label{27}
\end{equation}

with $A$ denoting the area of $B$ in the state $\Phi \in \Gamma^{(B)}.$

With this ingredients, \eqref{21} and \eqref{22} are nearly fixed. We so far have specified $G$ only up to a constant shift $G \to G+c.$ From \eqref{21}, we see that such a shift $G \to G+c$ corresponds to a rescaling $Z_n \to k^n Z_n$ for a certain $k \in \mathbb{C}.$ One easily verifies that the modular entropy \eqref{23} is invariant under such a rescaling and hence is invariant under the shift $G \to G+c.$ 

While the modular entropy (and hence the entanglement entropy) is independent of the chosen shift, the shift (and hence the generators $G$ and $K$) are uniquely fixed by requiring \eqref{25}, i.e. the density matrix to be normalized. We insist on this requirement in what follows and will determine the associated normalization of $K$ in the following. 

With the requirement \eqref{25}, the entanglement entropy \eqref{5} can be obtained from \eqref{24} and \eqref{23} as

\begin{equation}
S = - \left. \frac{d}{dn} \right|_{n=1} \ln (Z_n).
\label{28}
\end{equation}

From \eqref{21} and \eqref{22}, we then have

\begin{equation}
S = \left. \frac{\int_{\Phi \in R_n} \operatorname{Vol}^{(B)}(\alpha) \frac{K[\Phi(\alpha=2\pi n)]}{\hbar} e^{-\frac{I_n [\Phi]}{\hbar}}}{\int_{\Phi \in R_n} \operatorname{Vol}^{(B)}(\alpha) e^{-\frac{I_n [\Phi]}{\hbar}}} \right|_{n=1}.
\label{29}
\end{equation}

In \eqref{29}, we have restored $\hbar.$ So far, the expression \eqref{29} for the entanglement entropy is exact. We see that \eqref{29} can be seen as an expectation value of the generator $K$ taken over all replica spacetimes $R_n.$ The natural way to evaluate \eqref{29} at least approximately is to ask for the dominant replica spacetimes contributing to the expectation value \eqref{29} in the $\hbar \to 0$ limit. Precisely this, we will do in the next section. 

Before doing so, we briefly fix the normalization of $K$ appearing in \eqref{29}. As mentioned, \eqref{29} is valid under the requirement \eqref{25} which determines $K$ uniquely. We can read off the normalization of $K$ directly from \eqref{29}. For any state \eqref{1}, we imagine making $B$ smaller until it collapses to the empty set. Under this procedure, the entanglement entropy must go to zero $S \to 0.$ We noticed that $K$ was given by \eqref{26} up to a constant shift. However, because $S \to 0$ under the described procedure, we learn from \eqref{29} that this constant shift must vanish. Hence, \eqref{29} is valid if $K$ is precisely given by \eqref{26} (which reduces to \eqref{27} in the appropriate special case).

\subsection{Leading Order}
\label{Kapitel 3.1}

We want to find a suitable approximation to the sum over replica spacetimes in \eqref{29}. As noted, the natural approximation is to ask which replica spacetimes dominate \eqref{29} in the $\hbar \to 0$ limit? 

Let $\Phi \in R_n$ be a replica spacetime. In order for the fluctuations around $\Phi$ to interfere constructively in \eqref{29}, the integrand has to be stationary. That is, both the action functional

\begin{equation}
\delta I_n = 0
\label{30}
\end{equation}

as well as the generator

\begin{equation}
\delta K[\Phi(\alpha=2\pi n)] = 0
\label{31}
\end{equation}

have to be stationary under variations around $\Phi.$\footnote{Note that (traces of) density matrices are often given by functional integrals of the general Schwinger-Keldysh form such as in equation $(2)$ of \cite{Averin:2024yeo}. While a given path gives a complex contribution to the functional integral, the hermiticity of the density matrix is ensured by the Schwinger-Keldysh form since the reversed path gives precisely the complex conjugate contribution. Saddle points of the functional integration always occur pairwise with one path in phase space following the flow generated by the Hamiltonian and the associated reversed path following the flow of the sign flipped Hamiltonian giving precisely the complex conjugate contribution. Furthermore, we note that the same reasoning applies to the case we are considering here.} Because the replica
spacetimes are required to be closed curves in $\Gamma^{(B)},$ the condition \eqref{30} requires the replica spacetime to follow the Hamiltonian evolution generated by $G$ in $\Gamma^{(B)}$ as follows from \eqref{20} (see also $(6)$ in \cite{Averin:2024its}). Such a stationary replica spacetime exists for any $n.$ In order to see this, we again refer to Fig. $4$ in \cite{Averin:2024yeo} which we, as explained, can interpret as the case $n=1.$ The stationarity of the evolution of $\Sigma$ along $t$ is equivalent to the stationarity of the evolution of $\Sigma(B)$ along $\alpha,$ i.e. it means the spacetime shown in Fig. $4$ in \cite{Averin:2024yeo} is a solution of the theory's equations of motion. A stationary evolution along $t$ exists and is obtained from the initial conditions on position coordinates at the times $t=\pm i T_E.$ It can then be translated to a stationary evolution along $\alpha,$ i.e. a replica spacetime for $n=1$ following the Hamiltonian flow along $G.$ Then, there is no obstacle in continuing this flow to arbitrary possibly non-integer $n$ differing from $1.$ 

While there is no problem in discussing replica spacetimes for non-integer $n,$ only for $n=1$ equivalence of the evolutions of Fig. $4$ in \cite{Averin:2024yeo} along the time $t$ and angle $\alpha$ exists. For non-integer $n$ the spacetime shown in Fig. $4$ in \cite{Averin:2024yeo} could be interpreted as possessing a conical deficit at $B.$

Altogether, we learn that for arbitrary $n,$ a replica spacetime $\Phi \in R_n$ can be chosen to satisfy condition \eqref{30}. However, to dominate \eqref{29} in the $\hbar \to 0$ limit, the extremization condition \eqref{31} has to be additionally fulfilled. We hence restrict to stationary replica spacetimes $\Phi$ (i.e. \eqref{30} holds) extremizing $K[\Phi(\alpha=2\pi n)].$

This means $\Phi(\alpha)$ follows the Hamiltonian evolution along $\alpha$ generated by $G$ which, as mentioned, is a diffeomorphism behaving as a rotation with angle $\alpha$ near $B.$ Since $K$ only depends on $\Phi(\alpha)$ through quantities on $B$ (see \eqref{26}), $K[\Phi(\alpha)]$ (or $G[\Phi(\alpha)]$ according to \eqref{10}) is constant in $\alpha$ for fixed $n.$ A look at \eqref{20} then reveals that the considered replica spacetimes are
exponentially suppressed in the nominator and denominator of \eqref{29} by a factor given by their extremized value of $\frac{1}{\hbar} K[\Phi(\alpha=2\pi n)].$ The dominant replica spacetime in the fraction in \eqref{29} is then the one with the minimal value of the extremized $\frac{1}{\hbar} K[\Phi(\alpha=2\pi n)].$ 

In total, we then obtain for the fraction in \eqref{29} in the $\hbar \to 0$ limit 

\begin{equation}
-\frac{d}{dn} \ln (Z_n) \xrightarrow[\hbar \to 0]{} \underset{\Phi \in R_n}{\operatorname{min} \operatorname{Extr}} \left\{ \frac{1}{\hbar} K[\Phi(\alpha=2\pi n)] \bigg| \delta I_n = 0 \right\}.
\label{32}
\end{equation}

With this, we obtain from \eqref{28} the following general prescription for the entanglement entropy \eqref{5} to leading order in the $\hbar \to 0$ limit

\begin{equation}
S \xrightarrow[\hbar \to 0]{} \lim_{n \to 1} \underset{\Phi \in R_n}{\operatorname{min} \operatorname{Extr}} \left\{ \frac{1}{\hbar} K[\Phi(\alpha=2\pi n)] \bigg| \delta I_n = 0 \right\}.
\label{33}
\end{equation}

The expression \eqref{33} gives a practical way to evaluate the summation in \eqref{29} at least to leading order in $\hbar.$ In order to concretely apply \eqref{33} for the computation of entanglement entropies, we follow the instructions used in the derivation of \eqref{32}. For convenience, we briefly summarize them in the following prescription:

\begin{figure}[h!]
\centering
  \includegraphics[trim = 0mm 90mm 0mm 30mm, clip, width=0.7\linewidth]{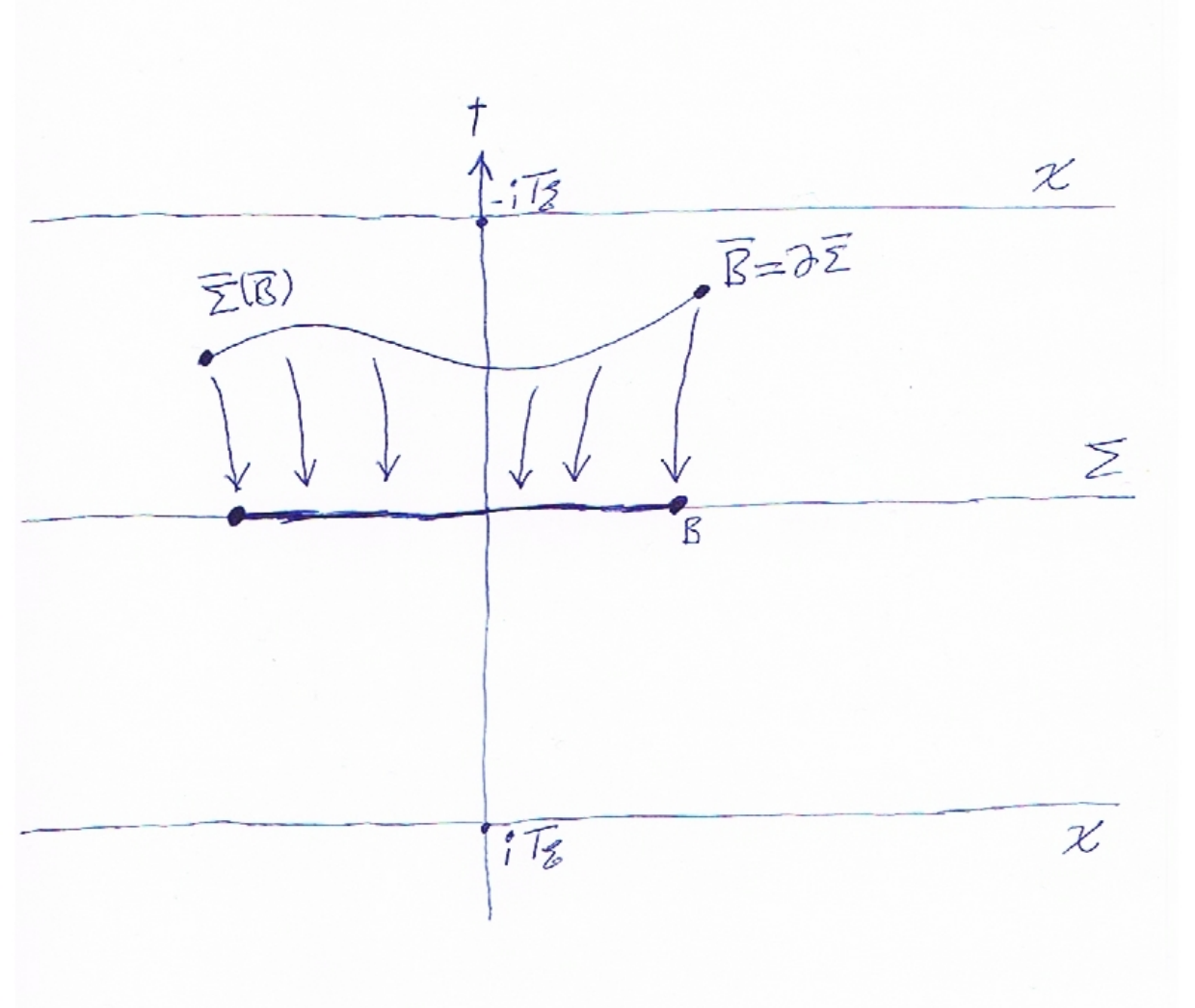}
  \caption{Visualization of the prescription to evaluate \eqref{33}.}
	\label{Fig. 1}
\end{figure}

\begin{description}
\item[Step 1:]We find a spacetime foliated by $\Sigma$ along $t$ solving the theory's equations of motion from the initial conditions at $t=\pm iT_E.$ The procedure is visualized in Fig. \ref{Fig. 1}.
\item[Step 2:]We translate the spacetime found in the first step and shown in Fig. \ref{Fig. 1} to a stationary replica spacetime $\bar{\Phi}(\alpha)$ foliating $\Sigma(B)$ along $\alpha.$ 

This translation is not unique. The theory under consideration is a gauge theory (at least diffeomorphism invariant). Hence, we can perform gauge transformations in the interior of the spacetime in Fig. \ref{Fig. 1} leaving the evolution of $\Sigma$ along $t$ unchanged. However, the evolution of $\Sigma(B)$ along $\alpha$ in general does change leading to an inequivalent stationary replica spacetime. For instance, in Fig. \ref{Fig. 1} a new replica spacetime is constructed by dragging the surface $\bar{\Sigma}(\bar{B})$ bounded by $\bar{B}$ to $\Sigma(B)$ with a diffeomorphism represented by the arrows. Therefore, all codimension-2 surfaces $\bar{B}$ in the spacetime in Fig. \ref{Fig. 1} that can be dragged to $B$ in the mentioned way lead to replica spacetimes which are generally different elements of the set in \eqref{33}.
\item[Step 3:]The second step has constructed the set in \eqref{33} over which we have to perform the extremization and minimization procedure as required by \eqref{33}.
\end{description}

We close this section with some comments. 

It is remarkable that, according to the described prescription, \eqref{33} allows one to read off quantum mechanical entanglement entropies from classical geometries. This was already noticed in the context of the bulk-boundary correspondence in \cite{Ryu:2006bv,Ryu:2006ef} where it is known as the Ryu-Takayanagi formula. Our derived prescription \eqref{33} generalizes the Ryu-Takayanagi formula (and existing variants). While we here see that \eqref{33} contains all of the ingredients of the Ryu-Takayanagi formula, we postpone a detailed comparison to the next chapter. 

As explained, the extremizing and minimizing replica spacetime in \eqref{33} corresponds to a deformation $\bar{B}$ of the codimension-2 surface $B$ with some suited diffeomorphism in the spacetime in Fig. \ref{Fig. 1}. In the context of the Ryu-Takayanagi formula, the surface $\bar{B}$ is hence known as the \emph{extremal surface}. 

Note that the deformations $\bar{B}$ and hence the extremal surface are allowed in Fig. \ref{Fig. 1} also to lie in regions of real time $t.$\footnote{Whether a gauge transformation like a specific deformation $\bar{B}$ of $B$ is allowed in Fig. \ref{Fig. 1} to provide a legal replica spacetime in the set in \eqref{33} is determined by the definition of the possifold flow $\partial \Sigma \to B.$ It determines which canonical degrees of freedom are contained in $B$ or its complement. This information is encoded in the canonical form $\Theta$ defining the theory under consideration. We refer to \cite{Averin:2024its} where this issue was discussed in detail in the definition of possifold flows for field theories.}

The limes is essential in the prescription \eqref{33}. In general, it does not commute with the extremization procedure. This was already noted in \cite{Dong:2013qoa,Dong:2017xht}. 

Altogether, we have found in this section the prescription \eqref{33} which evaluates the entanglement entropy \eqref{29} to leading order in $\hbar.$ This leading order contribution can hereby be inferred from a classical geometry by searching for an extremal surface which extremizes the functional $K$ as formalized in \eqref{33}. According to \eqref{27}, for the case of Einstein-Hilbert gravity, the extremal surface has to extremize its surface area. 

In the next section, we want to evaluate the functional integral expression of the entanglement entropy \eqref{29} beyond leading order. 

\subsection{Beyond Leading Order}
\label{Kapitel 3.2}

As in the last section, we want to find a practical way for evaluating expression \eqref{29} for the entanglement entropy.

However, in this section, our aim is to obtain an expression valid beyond leading order in $\hbar.$ 

For arbitrary $n \in \mathbb{R},$ we choose a replica spacetime $\bar{\Phi} \in R_n.$ Using \eqref{23} and \eqref{24}, we can write 

\begin{equation} \label{34}
\begin{split}
S &= \lim_{n \to 1} \left(1-n\frac{d}{dn} \right) \ln (Z_n) \\
&= \lim_{n \to 1} \left( \frac{1}{\hbar}K[\bar{\Phi}(\alpha=2\pi n)] + \left(1-n\frac{d}{dn} \right) \ln (Z_n) -\frac{1}{\hbar}K[\bar{\Phi}(\alpha=2\pi n)] \right).
\end{split}
\end{equation}  

With the notation being clear in a moment, we define

\begin{equation}
S_{bulk} [\bar{\Phi}] := \left(1-n\frac{d}{dn} \right) \ln (Z_n) -\frac{1}{\hbar}K[\bar{\Phi}(\alpha=2\pi n)].
\label{35}
\end{equation}

Since \eqref{34} does not depend on $\bar{\Phi},$ we can obviously write

\begin{equation}
S = \lim_{n \to 1} \underset{\bar{\Phi} \in R_n}{\operatorname{min} \operatorname{Extr}} \left\{ \frac{1}{\hbar}K[\bar{\Phi}(\alpha=2\pi n)] + S_{bulk} [\bar{\Phi}] \right\}.
\label{36}
\end{equation}

We define $\bar{Z}_n$ as the function in $n$ solving the differential equation 

\begin{equation}
\left(1-n\frac{d}{dn} \right) \ln (\bar{Z}_n) = \frac{1}{\hbar}K[\bar{\Phi}(\alpha=2\pi n)]
\label{37}
\end{equation}

under the initial condition $\bar{Z}_{n=1} = 1.$ If $\bar{\Phi} \in R_n$ is a stationary replica spacetime fulfilling the conditions of \eqref{32}, then $\bar{Z}_n$ is the partition function in the vicinity of $n=1$ to leading order in $\hbar$ as follows from \eqref{32}. In that case, the modular entropy \eqref{35} 

\begin{equation}
S_{bulk} [\bar{\Phi}] = \left(1-n\frac{d}{dn}\right) \ln \left( \frac{Z_n}{\bar{Z}_n} \right)
\label{38}
\end{equation}

has a simple meaning. $\frac{Z_n}{\bar{Z}_n}$ can, to leading order in $\hbar,$ be interpreted as the partition function of the fluctuations around the background replica spacetime $\bar{\Phi} \in R_n.$ $S_{bulk}[\bar{\Phi}]$ is then the associated modular entropy of these fluctuations around the background (or bulk) spacetime $\bar{\Phi}.$ This explains the notation introduced in \eqref{35}. 

We are now able to formulate a practical prescription for the evaluation of the entanglement entropy \eqref{29} to arbitrary order in $\hbar.$ First, we compute $S_{bulk}[\bar{\Phi}]$ as defined in \eqref{38} to arbitrary order in a sensible perturbative expansion for any replica spacetime $\bar{\Phi} \in R_n.$ To the chosen order in perturbation theory, the entanglement entropy is obtained by the extremization procedure \eqref{36}. This extremization procedure proceeds precisely in the same manner as in the extremal surface prescription \eqref{33} of the last section and all comments essentially apply unchanged. There are only two essential differences. Contrary to \eqref{33}, we have to extremize over all $\bar{\Phi} \in R_n$ in \eqref{36} which need not be solutions of the equations of motion. Furthermore, in \eqref{36} we are extremizing with respect to a different functional than in \eqref{33}. In the context of the Ryu-Takayanagi formula, the searched codimension-2 surface $\bar{B}$ is known in this quantum corrected prescription as the \emph{quantum extremal surface.}

While the extremal surface contribution \eqref{33} can be inferred from a classical geometry, the leading order corrections due to the quantum extremal surface prescription \eqref{36} can be inferred from a semiclassical geometry. In fact, this can be done in the very practical manner described in the three steps of the last section. The first two steps are identical (see accordingly Fig. \ref{Fig. 1} for a visualization). The third step applies similarly except that the extremization procedure in \eqref{33} has now to be performed with respect to

\begin{equation}
\frac{1}{\hbar}K[\bar{\Phi}(\alpha=2\pi n)] + S_{bulk} [\bar{\Phi}].
\label{39}
\end{equation}

Since $\bar{\Phi} \in R_n$ is a replica spacetime solving the equations of motion, $S_{bulk} [\bar{\Phi}]$ is to leading order given by the modular entropy of the fluctuations around the background $\bar{\Phi}$ as explained below \eqref{38}. Hence, $S_{bulk} [\bar{\Phi}]$ can be inferred from a quantum field theory on curved spacetime calculation. In other words, the quantum extremal surface $\bar{B}$ is found in Fig. \ref{Fig. 1} by deforming $B$ so as to extremize the quantity \eqref{39} which to this order also involves quantum fluctuations around $\bar{\Phi}.$ The entanglement entropy can thus be obtained from the semiclassical geometry around the spacetime shown in Fig. \ref{Fig. 1}.

Having given a practical way to compute entanglement entropies \eqref{29} in diffeomorphism invariant field theories in this chapter, we answered the question raised at the beginning of this chapter. In the next chapter, we want to compare our results with the existing ones. Previous statements about the entanglement entropy in gravitational theories are formulated in the context of the bulk-boundary correspondence. We therefore need to specialize to this case. 

\section{Comparison with the Bulk-Boundary Correspondence}
\label{Kapitel 4}

In chapter \ref{Kapitel 2} we have derived a functional integral expression for the entanglement entropy in diffeomorphism invariant field theories. A practical prescription for the evaluation of the entanglement entropy was consequently obtained in chapter \ref{Kapitel 3}. In the special case of the bulk-boundary correspondence, that is, if the gravitational theory under consideration possesses a holographic dual, prescriptions for the entanglement entropy already exist. In this chapter, we want to compare our general prescriptions of chapter \ref{Kapitel 2} and \ref{Kapitel 3} with the existing results. We will see that the existing results emerge as special cases of the discussion in chapter \ref{Kapitel 2} and \ref{Kapitel 3}.

To begin our comparison, we need to adjust the prescriptions of chapter \ref{Kapitel 2} and \ref{Kapitel 3} to the special case of the bulk-boundary correspondence. For this, there is only a little adjustment to be taken. It concerns the choice of the codimension-2 surface $B$ defining the possifold flow in the situation of chapters \ref{Kapitel 2} and \ref{Kapitel 3}. In the context of the bulk-boundary correspondence, $B = X \cup A$ is required to be a disjoint union of a subset $A \subseteq \partial \Sigma$ of the spacetime boundary and a subset $X \subseteq \Sigma$ lying in the bulk. The situation is illustrated in Fig. \ref{Fig. 2}. 

\begin{figure}[h!]
\centering
  \includegraphics[trim = 0mm 90mm 0mm 60mm, clip, width=0.7\linewidth]{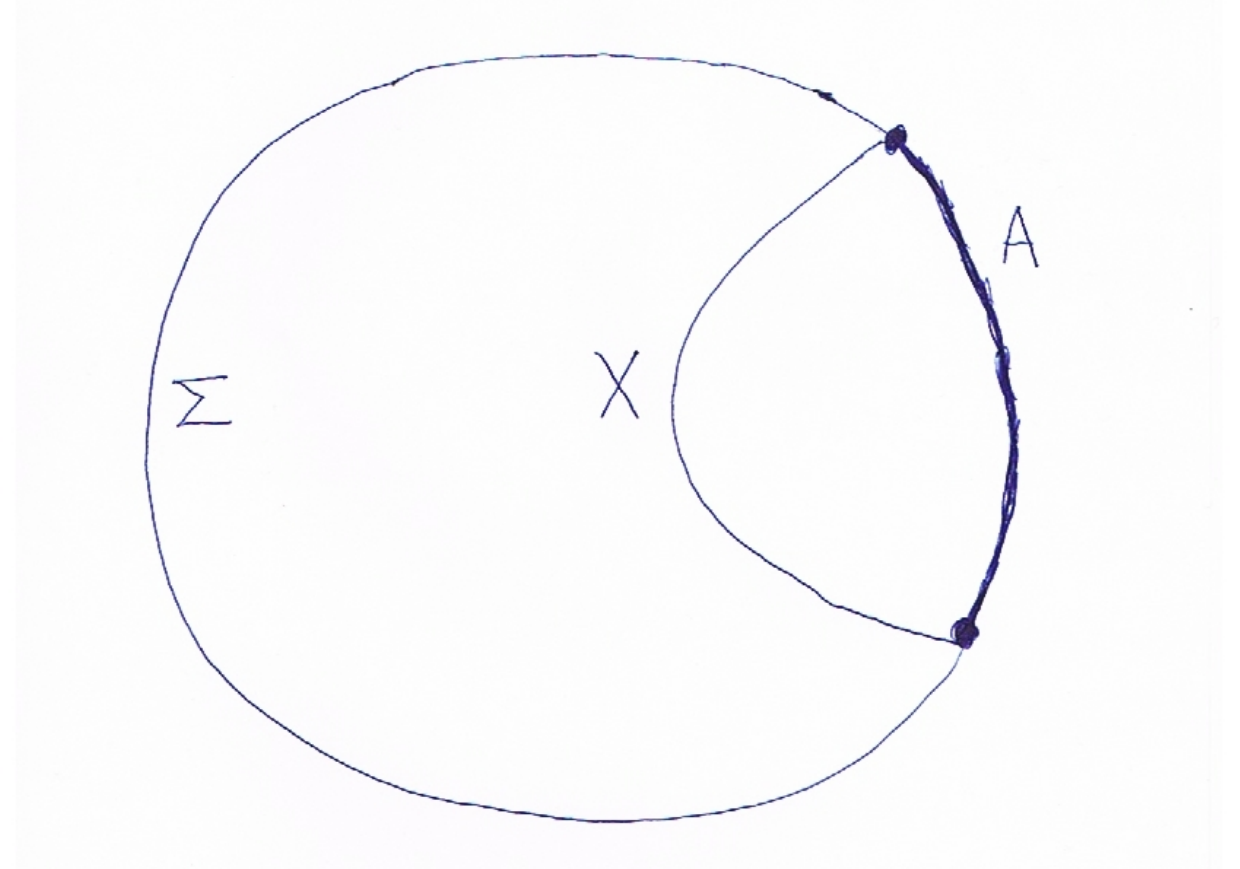}
  \caption{$\Sigma$ is represented by the disk with the circle being its boundary. The possifold flow is defined by the surface $B= X \cup A.$ It bounds the region $\Sigma(B)$ between $X$ and $A.$}
	\label{Fig. 2}
\end{figure} 

In this situation, the only adjustment to be made in the prescriptions of chapter \ref{Kapitel 2} and \ref{Kapitel 3} is to carefully compute the generator $K$ and to adopt \eqref{26} accordingly. In order to do so, we need to repeat the derivation of the generator $K$ or equivalently $G$ in \eqref{10} as was done in \cite{Averin:2024yeo}. We follow this derivation and refer to \cite{Averin:2024yeo} for details. The role of the generator $G$ was to implement a certain diffeomorphism $\xi.$ With the same notation as in \eqref{26}, $G$ is according to equation $(47)$ of \cite{Averin:2024yeo} up to a constant given by

\begin{equation}
G = \oint_B X^{cd}(\Phi) \nabla_{[c} \xi_{d]}.
\label{40}
\end{equation}

The contribution of the part $X$ of $B$ to \eqref{40} is unchanged compared to the derivation in \cite{Averin:2024yeo}. Here, as already mentioned several times and discussed in detail in \cite{Averin:2024yeo}, the diffeomorphism $\xi$ near $X$ looks like a rotation in the plane normal to $X.$ Accordingly, the contribution of $X$ to $G$ and then to $K$ is unchanged and hence of the same form as in \eqref{26}. However, since the part $A \subseteq \partial \Sigma$ is located inside the boundary $\partial \Sigma,$ the diffeomorphism $\xi$ vanishes near $A.$ Hence, there is due to \eqref{40} no contribution to $G$ and then $K$ from $A.$ In total, this means that \eqref{26} is altered to

\begin{equation}
K[\Phi] = 2\pi \int_X X^{cd} \varepsilon_{cd}
\label{41}
\end{equation}

for a state $\Phi \in \Gamma^{(B)}.$ 

Accordingly, following Example $3.1$ of \cite{Averin:2024yeo}, \eqref{27} is altered to

\begin{equation}
K=\frac{A(X)}{4G}
\label{42}
\end{equation}

where $A(X)$ denotes the area of $X$ in a particular state. 

Applying the prescriptions together with the explanations of the last chapter \ref{Kapitel 3} with these adjustments, we obtain the existing results as special cases. For Einstein-Hilbert gravity with minimally coupled matter, where $K$ has the form \eqref{42}, the prescription \eqref{33} matches precisely the covariant holographic entanglement entropy proposal \cite{Hubeny:2007xt} which is a generalization of the original Ryu-Takayanagi conjecture \cite{Ryu:2006bv,Ryu:2006ef} (see \cite{Rangamani:2016dms} for a review). For higher-derivative gravity, where $K$ is generally given by \eqref{41}, the prescription \eqref{33} matches formula $(2.14)$ of \cite{Dong:2017xht} and hence the generalization of the Ryu-Takayanagi formula in \cite{Dong:2013qoa}. The prescription \eqref{36} to include corrections beyond the leading order matches the quantum extremal surface prescription of \cite{Faulkner:2013ana,Engelhardt:2014gca,Dong:2017xht}. 

Altogether, we see that our prescriptions for the entanglement entropy derived in chapters \ref{Kapitel 2} and \ref{Kapitel 3} reproduce the existing results in the literature. Before concluding this chapter, we want to compare our derivation with the ones of the existing results.

A derivation of the Ryu-Takayanagi conjecture was first presented in \cite{Lewkowycz:2013nqa}. A review of this derivation together with a proof of the more general covariant holographic entanglement entropy proposal is given in \cite{Rangamani:2016dms}. We briefly summarize the main points of this derivation and compare them to our derivation presented here. 

The starting point is to consider a gravitational theory which is assumed to possess a boundary dual in the situation as in Fig. \ref{Fig. 2}. The boundary theory is then defined on $\partial \Sigma$ and one is interested in entanglement entropies with respect to the part $A \subseteq \partial \Sigma$ of the boundary. In the boundary theory this is naturally done by defining $Z_n= \operatorname{tr} (\rho^n)$ with $\rho$ being the density matrix of the considered state with respect to $A$ and then using the replica trick. The assumption of the bulk-boundary correspondence states that $Z_n$ can equally well be obtained in the gravitational bulk theory. For integer $n,$ the partition function is then approximated as $Z_n= e^{-I_n}$ with $I_n$ being the action of the spacetime $M_n$ representing a dominant saddle point of the gravitational bulk theory. The spacetime $M_n$ is hence obtained as a solution to the bulk equations of motion with boundary conditions as set in the definition of $Z_n= \operatorname{tr} (\rho^n)$ in the boundary theory. In the boundary theory, $Z_n$ is viewed as an integration over a manifold known as the $n$-fold branched cover. The important point is that this $n$-fold branched cover has a $\mathbb{Z}_n$ replica symmetry reflecting cyclic permutation symmetry in the definition of $Z_n= \operatorname{tr} (\rho^n).$ It is then assumed that the spacetime $M_n$ inherits this $\mathbb{Z}_n$ symmetry with a codimension-2 surface $X$ being the associated fixed point as represented in Fig. \ref{Fig. 2}. The essence of the derivation in \cite{Lewkowycz:2013nqa,Rangamani:2016dms} is then to analytically continue the situation to non-integer $n,$ where the spacetime $M_n$ then has a conical singularity at $X.$ The main part is to compute $\frac{dI_n}{dn}$ needed for processing the replica trick which is shown to receive only contributions from the singular surface $X$ and then to give rise to the expected expression for the entanglement entropy. The more general derivations in \cite{Dong:2013qoa,Dong:2017xht} essentially proceed in the same way by considering more general gravitational theories and/or quantum corrections. 

The role of $M_n$ in Step $2$ of the prescription in chapter \ref{Kapitel 3.1} is played by the replica spacetime $\bar{\Phi}(\alpha) \in R_n.$ Our derivation in chapter \ref{Kapitel 3.1} shows its existence where we explained the meaning of the R\'{e}nyi-index $n$ and the associated $\mathbb{Z}_n$ symmetry for integer $n$ as well as the conical singularity at the $\mathbb{Z}_n$ fixed point $B$ for non-integer $n.$ In the derivations reviewed in the previous paragraph, the derivative $\frac{dI_n}{dn}$ has to be determined by carefully taking into account the behavior of spacetime fields near the conical singularity (see the formalism of squashed cones in \cite{Dong:2013qoa}). In our derivation presented in chapter \ref{Kapitel 2} and \ref{Kapitel 3}, this derivative is naturally obtained in Lemma \ref{Lemma 2.1}. There, this derivative is obtained in a simple manner because of the natural functional integral from \eqref{4} of the density matrix. This in turn is a consequence of the commutativity of summation in the functional integration interplayed with the diffeomorphism invariance of the field theory under consideration. 

\section{Discussion and Outlook}
\label{Kapitel 5}

What do we have achieved?

We have derived a functional integral expression for the entanglement entropy \eqref{29} for diffeomorphism invariant field theories. In chapter \ref{Kapitel 3}, we have stated and proven a practical prescription for the evaluation of \eqref{29} to leading order in $\hbar$ and beyond. These results fulfill the task asked for in the introduction.

Furthermore, we have shown in chapter \ref{Kapitel 4} that our results reproduce the Ryu-Takayanagi formula and its various generalizations in the context of the bulk-boundary correspondence. Our results provide hereby an independent derivation which clearly unravels the physical origin as well as the mathematics behind the Ryu-Takayanagi type formulas. As explained in detail, the intimate interplay between the covariance of the functional integral and diffeomorphism invariance is the reason for their existence. 

In general, as with any quantum theory following the framework of \cite{Averin:2024its}, we view the quantum theory of gravity as being given by a weighted sum over paths in phase space. Manipulations of this sum exploiting its commutativity together with diffeomorphism invariance lead maybe surprisingly to formal statements and proofs of results that were otherwise only either conjectured on a heuristic basis or stated in the context of the bulk-boundary correspondence. An example for the former is the gravitational entropy bound which is generally motivated based on black hole thermodynamics. A formal statement and proof was given in \cite{Averin:2024yeo} made possible by the described framework. An example for the latter is the gravitational entanglement entropy derived here by the same methods. 

It is notable that highly expected non-perturbative statements about quantum gravity (expected on different grounds) can be formally stated and derived within the explained framework. This is highly supportive for the highly non-trivial viewpoint on the quantum theory of gravity as being given by a sum over paths in phase space. 

An intrinsic feature of this viewpoint is the contact to phase space. In fact, the gravitational entropy bound is a statement on the geometry of phase space. At the same time, it is this contact to phase space that allowed us to generalize the Ryu-Takayanagi type formulas to general gravitational theories not necessarily possessing a holographic dual. This provides the additional advantage in our approach that it singles out the concrete part of phase space that is responsible for a particular entanglement entropy that is being evaluated in a concrete example. 

This brings us to our last point. We have here given no concrete examples. In the case of the bulk-boundary correspondence, our prescription on the evaluation of entanglement entropies reduces to the known Ryu-Takayanagi type formulas. A lot of examples on this exist in the literature (see, for instance, \cite{Rangamani:2016dms}). The more interesting case is of course to consider a situation in a gravitational theory without known holographic dual. Such a situation would be a black hole in asymptotically flat space. As emphasized, our approach makes direct contact with the phase space relevant for entanglement entropy. The hope is hence that our results on entanglement entropy make a statement about the part of phase space responsible for black hole entropy and microstates.

The analysis of this example deserves its own space and we will present it in a different place.      

\section*{Acknowledgements}
We thank Alexander Gußmann for many discussions on this and other topics in physics.

\end{document}